\documentclass[12pt,oneside]{amsart}

%
\usepackage{amssymb}
\usepackage{amsfonts}
\usepackage{amscd}
\usepackage[matrix, arrow, curve]{xy}
\usepackage{graphicx}
\usepackage{color}


%
%
\numberwithin{equation}{section}
\overfullrule=0pt
\textwidth15.5cm \textheight=23.0truecm
\voffset-1.5cm
\newtheorem{theorem}{Theorem}[section]

\newtheorem{proposition}[theorem]{Proposition}

\theoremstyle{definition}

\theoremstyle{remark}

\begin{document}
\newcommand{\M}{\mathcal{M}}
\newcommand{\F}{\mathcal{F}}

\newcommand{\Teich}{\mathcal{T}_{g,N+1}^{(1)}}
\newcommand{\T}{\mathrm{T}}
\newcommand{\corr}{\bf}
\newcommand{\vac}{|0\rangle}
\newcommand{\Ga}{\Gamma}
\newcommand{\new}{\bf}
\newcommand{\define}{\def}
\newcommand{\redefine}{\def}
\newcommand{\Cal}[1]{\mathcal{#1}}
\renewcommand{\frak}[1]{\mathfrak{{#1}}}
\newcommand{\Hom}{\rm{Hom}\,}
\newcommand{\refE}[1]{(\ref{E:#1})}
\newcommand{\refCh}[1]{Chapter~\ref{Ch:#1}}
\newcommand{\refS}[1]{Section~\ref{S:#1}}
\newcommand{\refSS}[1]{Section~\ref{SS:#1}}
\newcommand{\refT}[1]{Theorem~\ref{T:#1}}
\newcommand{\refO}[1]{Observation~\ref{O:#1}}
\newcommand{\refP}[1]{Proposition~\ref{P:#1}}
\newcommand{\refD}[1]{Definition~\ref{D:#1}}
\newcommand{\refC}[1]{Corollary~\ref{C:#1}}
\newcommand{\refL}[1]{Lemma~\ref{L:#1}}
\newcommand{\refEx}[1]{Example~\ref{Ex:#1}}
\newcommand{\R}{\ensuremath{\mathbb{R}}}
\newcommand{\C}{\ensuremath{\mathbb{C}}}
\newcommand{\N}{\ensuremath{\mathbb{N}}}
\newcommand{\Q}{\ensuremath{\mathbb{Q}}}
\renewcommand{\P}{\ensuremath{\mathcal{P}}}
\newcommand{\Z}{\ensuremath{\mathbb{Z}}}
\newcommand{\kv}{{k^{\vee}}}
\renewcommand{\l}{\lambda}
\newcommand{\gb}{\overline{\mathfrak{g}}}
\newcommand{\dt}{\tilde d}     
\newcommand{\hb}{\overline{\mathfrak{h}}}
\newcommand{\g}{\mathfrak{g}}
\newcommand{\h}{\mathfrak{h}}
\newcommand{\gh}{\widehat{\mathfrak{g}}}
\newcommand{\ghN}{\widehat{\mathfrak{g}_{(N)}}}
\newcommand{\gbN}{\overline{\mathfrak{g}_{(N)}}}
\newcommand{\tr}{\mathrm{tr}}
\newcommand{\gln}{\mathfrak{gl}(n)}
\newcommand{\son}{\mathfrak{so}(n)}
\newcommand{\spnn}{\mathfrak{sp}(2n)}
\newcommand{\sln}{\mathfrak{sl}}
\newcommand{\sn}{\mathfrak{s}}
\newcommand{\so}{\mathfrak{so}}
\newcommand{\spn}{\mathfrak{sp}}
\newcommand{\tsp}{\mathfrak{tsp}(2n)}
\newcommand{\gl}{\mathfrak{gl}}
\newcommand{\slnb}{{\overline{\mathfrak{sl}}}}
\newcommand{\snb}{{\overline{\mathfrak{s}}}}
\newcommand{\sob}{{\overline{\mathfrak{so}}}}
\newcommand{\spnb}{{\overline{\mathfrak{sp}}}}
\newcommand{\glb}{{\overline{\mathfrak{gl}}}}
\newcommand{\Hwft}{\mathcal{H}_{F,\tau}}
\newcommand{\Hwftm}{\mathcal{H}_{F,\tau}^{(m)}}

\newcommand{\car}{{\mathfrak{h}}}    
\newcommand{\bor}{{\mathfrak{b}}}    
\newcommand{\nil}{{\mathfrak{n}}}    
\newcommand{\vp}{{\varphi}}
\newcommand{\bh}{\widehat{\mathfrak{b}}}  
\newcommand{\bb}{\overline{\mathfrak{b}}}  
\newcommand{\Vh}{\widehat{\mathcal V}}
\newcommand{\KZ}{Kniz\-hnik-Zamo\-lod\-chi\-kov}
\newcommand{\TUY}{Tsuchia, Ueno  and Yamada}
\newcommand{\KN} {Kri\-che\-ver-Novi\-kov}
\newcommand{\pN}{\ensuremath{(P_1,P_2,\ldots,P_N)}}
\newcommand{\xN}{\ensuremath{(\xi_1,\xi_2,\ldots,\xi_N)}}
\newcommand{\lN}{\ensuremath{(\lambda_1,\lambda_2,\ldots,\lambda_N)}}
\newcommand{\iN}{\ensuremath{1,\ldots, N}}
\newcommand{\iNf}{\ensuremath{1,\ldots, N,\infty}}

\newcommand{\tb}{\tilde \beta}
\newcommand{\tk}{\tilde \varkappa}
\newcommand{\ka}{\kappa}
\renewcommand{\k}{\varkappa}
\newcommand{\ce}{{c}}

\newcommand{\Pif} {P_{\infty}}
\newcommand{\Pinf} {P_{\infty}}
\newcommand{\PN}{\ensuremath{\{P_1,P_2,\ldots,P_N\}}}
\newcommand{\PNi}{\ensuremath{\{P_1,P_2,\ldots,P_N,P_\infty\}}}
\newcommand{\Fln}[1][n]{F_{#1}^\lambda}
\newcommand{\tang}{\mathrm{T}}
\newcommand{\Kl}[1][\lambda]{\can^{#1}}
\newcommand{\A}{\mathcal{A}}
\newcommand{\U}{\mathcal{U}}
\newcommand{\V}{\mathcal{V}}
\newcommand{\W}{\mathcal{W}}
\renewcommand{\O}{\mathcal{O}}
\newcommand{\Ae}{\widehat{\mathcal{A}}}
\newcommand{\Ah}{\widehat{\mathcal{A}}}
\newcommand{\La}{\mathcal{L}}
\newcommand{\Le}{\widehat{\mathcal{L}}}
\newcommand{\Lh}{\widehat{\mathcal{L}}}
\newcommand{\eh}{\widehat{e}}
\newcommand{\Da}{\mathcal{D}}
\newcommand{\kndual}[2]{\langle #1,#2\rangle}
\newcommand{\cins}{\frac 1{2\pi\mathrm{i}}\int_{C_S}}
\newcommand{\cinsl}{\frac 1{24\pi\mathrm{i}}\int_{C_S}}
\newcommand{\cinc}[1]{\frac 1{2\pi\mathrm{i}}\int_{#1}}
\newcommand{\cintl}[1]{\frac 1{24\pi\mathrm{i}}\int_{#1 }}
\newcommand{\w}{\omega}
\newcommand{\ord}{\operatorname{ord}}
\newcommand{\res}{\operatorname{res}}
\newcommand{\nord}[1]{:\mkern-5mu{#1}\mkern-5mu:}
\newcommand{\codim}{\operatorname{codim}}
\newcommand{\ad}{\operatorname{ad}}
\newcommand{\Ad}{\operatorname{Ad}}
\newcommand{\supp}{\operatorname{supp}}

\newcommand{\Fn}[1][\lambda]{\mathcal{F}^{#1}}
\newcommand{\Fl}[1][\lambda]{\mathcal{F}^{#1}}
\renewcommand{\Re}{\mathrm{Re}}

\newcommand{\ha}{H^\alpha}

\define\ldot{\hskip 1pt.\hskip 1pt}
\define\ifft{\qquad\text{if and only if}\qquad}
\define\a{\alpha}
\redefine\d{\delta}
\define\w{\omega}
\define\ep{\epsilon}
\redefine\b{\beta} \redefine\t{\tau} \redefine\i{{\,\mathrm{i}}\,}
\define\ga{\gamma}
\define\cint #1{\frac 1{2\pi\i}\int_{C_{#1}}}
\define\cintta{\frac 1{2\pi\i}\int_{C_{\tau}}}
\define\cintt{\frac 1{2\pi\i}\oint_{C}}
\define\cinttp{\frac 1{2\pi\i}\int_{C_{\tau'}}}
\define\cinto{\frac 1{2\pi\i}\int_{C_{0}}}
\define\cinttt{\frac 1{24\pi\i}\int_C}
\define\cintd{\frac 1{(2\pi \i)^2}\iint\limits_{C_{\tau}\,C_{\tau'}}}
\define\dintd{\frac 1{(2\pi \i)^2}\iint\limits_{C\,C'}}
\define\cintdr{\frac 1{(2\pi \i)^3}\int_{C_{\tau}}\int_{C_{\tau'}}
\int_{C_{\tau''}}}
\define\im{\operatorname{Im}}
\define\re{\operatorname{Re}}
\define\res{\operatorname{res}}
\redefine\deg{\operatornamewithlimits{deg}}
\define\ord{\operatorname{ord}}
\define\rank{\operatorname{rank}}
\define\fpz{\frac {d }{dz}}
\define\dzl{\,{dz}^\l}
\define\pfz#1{\frac {d#1}{dz}}

\define\K{\Cal K}
\define\U{\Cal U}
\redefine\O{\Cal O}
\define\He{\text{\rm H}^1}
\redefine\H{{\mathrm{H}}}
\define\Ho{\text{\rm H}^0}
\define\A{\Cal A}
\define\Do{\Cal D^{1}}
\define\Dh{\widehat{\mathcal{D}}^{1}}
\redefine\L{\Cal L}
\newcommand{\ND}{\ensuremath{\mathcal{N}^D}}
\redefine\D{\Cal D^{1}}
\define\KN {Kri\-che\-ver-Novi\-kov}
\define\Pif {{P_{\infty}}}
\define\Uif {{U_{\infty}}}
\define\Uifs {{U_{\infty}^*}}
\define\KM {Kac-Moody}
\define\Fln{\Cal F^\lambda_n}
\define\gb{\overline{\mathfrak{ g}}}
\define\G{\overline{\mathfrak{ g}}}
\define\Gb{\overline{\mathfrak{ g}}}
\redefine\g{\mathfrak{ g}}
\define\Gh{\widehat{\mathfrak{ g}}}
\define\gh{\widehat{\mathfrak{ g}}}
\define\Ah{\widehat{\Cal A}}
\define\Lh{\widehat{\Cal L}}
\define\Ugh{\Cal U(\Gh)}
\define\Xh{\hat X}
\define\Tld{...}
\define\iN{i=1,\ldots,N}
\define\iNi{i=1,\ldots,N,\infty}
\define\pN{p=1,\ldots,N}
\define\pNi{p=1,\ldots,N,\infty}
\define\de{\delta}

\define\kndual#1#2{\langle #1,#2\rangle}
\define \nord #1{:\mkern-5mu{#1}\mkern-5mu:}
\newcommand{\MgN}{\mathcal{M}_{g,N}} 
\newcommand{\MgNeki}{\mathcal{M}_{g,N+1}^{(k,\infty)}} 
\newcommand{\MgNeei}{\mathcal{M}_{g,N+1}^{(1,\infty)}} 
\newcommand{\MgNekp}{\mathcal{M}_{g,N+1}^{(k,p)}} 
\newcommand{\MgNkp}{\mathcal{M}_{g,N}^{(k,p)}} 
\newcommand{\MgNk}{\mathcal{M}_{g,N}^{(k)}} 
\newcommand{\MgNekpp}{\mathcal{M}_{g,N+1}^{(k,p')}} 
\newcommand{\MgNekkpp}{\mathcal{M}_{g,N+1}^{(k',p')}} 
\newcommand{\MgNezp}{\mathcal{M}_{g,N+1}^{(0,p)}} 
\newcommand{\MgNeep}{\mathcal{M}_{g,N+1}^{(1,p)}} 
\newcommand{\MgNeee}{\mathcal{M}_{g,N+1}^{(1,1)}} 
\newcommand{\MgNeez}{\mathcal{M}_{g,N+1}^{(1,0)}} 
\newcommand{\MgNezz}{\mathcal{M}_{g,N+1}^{(0,0)}} 
\newcommand{\MgNi}{\mathcal{M}_{g,N}^{\infty}} 
\newcommand{\MgNe}{\mathcal{M}_{g,N+1}} 
\newcommand{\MgNep}{\mathcal{M}_{g,N+1}^{(1)}} 
\newcommand{\MgNp}{\mathcal{M}_{g,N}^{(1)}} 
\newcommand{\Mgep}{\mathcal{M}_{g,1}^{(p)}} 
\newcommand{\MegN}{\mathcal{M}_{g,N+1}^{(1)}} 

\define \sinf{{\widehat{\sigma}}_\infty}
\define\Wt{\widetilde{W}}
\define\St{\widetilde{S}}
\newcommand{\SigmaT}{\widetilde{\Sigma}}
\newcommand{\hT}{\widetilde{\frak h}}
\define\Wn{W^{(1)}}
\define\Wtn{\widetilde{W}^{(1)}}
\define\btn{\tilde b^{(1)}}
\define\bt{\tilde b}
\define\bn{b^{(1)}}
\define \ainf{{\frak a}_\infty} 

%
\define\eps{\varepsilon}    
\newcommand{\e}{\varepsilon}
\define\doint{({\frac 1{2\pi\i}})^2\oint\limits _{C_0}
       \oint\limits _{C_0}}                            
\define\noint{ {\frac 1{2\pi\i}} \oint}   
\define \fh{{\frak h}}     
\define \fg{{\frak g}}     
\define \GKN{{\Cal G}}   
\define \gaff{{\hat\frak g}}   
\define\V{\Cal V}
\define \ms{{\Cal M}_{g,N}} 
\define \mse{{\Cal M}_{g,N+1}} 
\define \tOmega{\Tilde\Omega}
\define \tw{\Tilde\omega}
\define \hw{\hat\omega}
\define \s{\sigma}
\define \car{{\frak h}}    
\define \bor{{\frak b}}    
\define \nil{{\frak n}}    
\define \vp{{\varphi}}
\define\bh{\widehat{\frak b}}  
\define\bb{\overline{\frak b}}  
\define\KZ{Knizhnik-Zamolodchikov}
\define\ai{{\alpha(i)}}
\define\ak{{\alpha(k)}}
\define\aj{{\alpha(j)}}
\newcommand{\calF}{{\mathcal F}}
\newcommand{\ferm}{{\mathcal F}^{\infty /2}}
\newcommand{\Aut}{\operatorname{Aut}}
\newcommand{\End}{\operatorname{End}}
\newcommand{\novoe}{{\color[rgb]{1,0,0}\bf (Новое)}}
\newcommand{\staroe}{{\color[rgb]{0,0,1}\bf (Старое)}}
\newcommand{\red}{\color[rgb]{1,0,0}}
\newcommand{\blue}{\color[rgb]{0,0,1}}
\newcommand{\viol}{\color[rgb]{1,0,1}}

\newcommand{\laxgl}{\overline{\mathfrak{gl}}}
\newcommand{\laxsl}{\overline{\mathfrak{sl}}}
\newcommand{\laxso}{\overline{\mathfrak{so}}}
\newcommand{\laxsp}{\overline{\mathfrak{sp}}}
\newcommand{\laxs}{\overline{\mathfrak{s}}}
\newcommand{\laxg}{\overline{\frak g}}
\newcommand{\bgl}{\laxgl(n)}
\newcommand{\tX}{\widetilde{X}}
\newcommand{\tY}{\widetilde{Y}}
\newcommand{\tZ}{\widetilde{Z}}


\title[]{Separation of variables for Hitchin systems with the structure group  $SO(4)$, on genus two curves}
\author[O.K.Sheinman]{O.K.Sheinman}
\thanks{..............................................}
\address{Steklov Mathematical Institute of the Russian Academy of Sciences}
\dedicatory{To Sergei Petrovich Novikov on the occasion of his 85-th Birthday}
\maketitle
\begin{abstract}
Sets of points giving spectral curves can be regarded as phase coordinates of Hitchin systems. We address the problem of finding out trajectories of Hitchin systems in those coordinates. The problem is being solved for the systems with structure groups $SO(4)$ and $SL(2)$ on genus 2 curves. Our method is a transfer of straight line windings of  fibers of the Hitchin map, which are given by Prymians of the spectral curves for the systems with simple classical structure groups. The transfer is carried out by means of an analog of the Jacobi inversion map, which does not exist for Prymians in general but can be defined in the two cases in question.
\end{abstract}
\tableofcontents
\section{Introduction}
Hitchin systems \cite{Hitchin} are almost 40 years old already but we still have only one series of more or less explicitly solved systems, namely the systems with $SL(2)$ as the structure group, on genus 2 curves. Their geometrical investigation has been pioneered by B. van Geemen and E.Previato in \cite{Previato}. Further considerations were undertaken by K. Gawedzki and P. Tran-Ngoc-Bich in \cite{Gaw}. They relied on the description due to Narasimhan and Ramanan of the moduli space of rank 2 semi-stable holomorphic bundles on genus 2 curves as projectivization of the space of holomorphic theta functions  on $\C^2$. The considerations in \cite{Gaw} basically resulted in explicit expressions for the action--angle variables of the system. The authors of \cite{Gaw} also proposed a Lax representation with rational spectral parameter for the system, realized by means of $6\times 6$ matrices, nevertheless there was no attempt to obtain any explicit solution to the system in the quoted work. The last problem has been set by I.Krichever in \cite{Kr_Lax} for Hitchin systems with the structure group $GL(n)$, for an arbitrary $n$, and arbitrary compact base curve. In particular, he has  constructed a Baker--Akhieser function which enables in principle to express the solutions in terms of theta functions. As it has been noted in \cite{Sh_Bin}, there is an obstruction for similar construction in the case of simple structure groups.

In the present paper we propose an explicit solution to the Hitchin system with the structure group $SO(4)$ on a genus 2 curve. Our approach is based on the Separation of Variables technique.

Initially, Hitchin systems have been defined as integrable systems canonically related to the moduli space of holomorphic semi-stable bundles  on a compact Riemann surface. The phase space of the system is defined as a cotangent bundle towith a natural symplectic structure. By means of the Kodaira--Spencer theory the phase space is identified with the space of Higgs bundles, i.e. the holomorphic semi-stable bundles, every one equipped with a section of the corresponding bundle if fiberwise endomorphisms. Hamiltonians of the system come from invariants of the endomorphisms.

A characteristic equation of that section regarded to as a matrix depending on moduli of vector bundles, and of a point of the base Riemann surface, gives one more Riemann surface depending on moduli, and called spectral curve of the system. It is noted by A.Gorski, N Nekrasov and V.Rubtsov in \cite{GNR} that coordinates of points the spectral curve comes through give separating variables for the system.

It was noted later, in \cite{Sh_FAN_2019}, that as soon as the class of base curves is given, there arises an option of an algebraic definition of Hitchin systems, based on classification of their spectral curves, without any addressing the moduli spaces of bundles. Indeed, knowledge of the form of the spectral curve of the system, enables us to express the Hamiltonians of the of the system (i.e. coefficients of its spectral curve), and its Poisson bracket via coordinates of the points the curve passes through (see \cite{Sh_FAN_2019,Aint_sys}, and section \ref{S:so4} of the present work). Integrability of thus defined system directly follows from the results of \cite{Aint_sys}. All together gives an elementary definition of Hitchin systems, and a proof of the linearization of their trajectories on the Jacobian of the spectral curve in the case of the structure group $GL(n)$, and on the Prymian in the case of the system with a simple structure group. In othe words, trajectories can be straightened by means the Abel transform (Abel--Prym transform, resp.).

The main problem we address here is solution of a Hitchin system in the coordinates we define them, i.e. in coordinates of the points in $\C^3$ the spectral curve comes through (section \ref{S:Inv_AP}). It is a natural idea to obtain the trajectories by means operating with the inversion of the Abel transform to the straight line windings representing trajectories on the above mentioned tori. It can be done for Jacobian, and the corresponding theta function formula has been given in \cite{Sh_Bin}. In principle, the inversion of the Abel map is given by the Riemann vanishing theorem for the theta function (see beginning of section \ref{S:Inv_AP} of the present work).  In \cite{Dubr_theta}, an idea, also coming back to Riemann, has been presented how effectivise the inversion of the Abel map. It is shown there how given a point of the Jacobian to explicitly compute symmetric functions of the points of the divisor on the Riemann surface which is the preimage of that point. It is obviously maximum of what we can hope, because the points of the preimage are defined up to a permutation.

The situation with the Abel--Prym map is completely different. This map can not be reversed with help of an analog of the Riemann theorem, because the number of zeroes of the Prym theta function (of the auxiliary function constructed with its help, to be more precise) is twice as big as dimension of the Prymian \cite{Fay}. Moreover, its zeroes are in a generic position with respect to the involution on the spectral curve. It is our key observation that in the case of the structure group $SO(4)$, and of a genus 2 base curve, the spectral curve admits one more involution coming from the hyperelliptic involution of the base curve, the two involutions commute, and the Abel--Prym transform is invariant with respect to their product. The last enables us to halfen the number of points in the preimage, and thus to reverse the Abel--Prym transform in a similar way as we did that in \cite{Sh_Bin} for Jacobians.

In \refS{Hitch} we give a brief introduction to Hitchin systems, including the necessary results due to the author on separation of variables for them. More detailed presentation of the content of that Section has been given in the recent survey \cite{Sh_Bin}, and  in the earlier works by the author \cite{BorSh,Sh_FAN_2019,Aint_sys}.

In Section \ref{S:so4} we introduce the Hitchin systems with the structure group $SO(4)$, on genus 2 curves, following \cite{Sh_FAN_2019,BorSh}, i.e. with the help of the Separation of Variables technique.

In Section \ref{S:Inv_AP} we give a quadrature formula for Hitchin systems and set the problem of its reversion. For the systems with the structure group $SO(4)$ the last descends to the problem of reversion of the Abel--Prym map in presence of the second involution possessing some rare additional properties. We give a theta function formula for symmetric functions of the points in the preimage of a point of the Prymian.

In Section \ref{S:Loc_sol} we give an algorithm for obtaining local solutions in theta functions to Hitchin systems. We restrict ourselves with only local solutions for the following reason. The above presented method of reversing the Abel (the Abel--Prym, resp.) map is local in the sense that only $x$-coordinates of the points in the preimage can be found out unambiguously (up to a permutation) while the corresponding $y$-coordinates, and $\l$-coordinates are only determined up to choice of a branch of the spectral curve. However, it is sufficient for obtaining the solution with a given initial condition, because a choice of the last  fixes a choice of the corresponding branch.

Inspite the Hitchin system with the structure group $SL(2)$ on a genus 2 curve has been given an explicit description in \cite{Gaw}, its trajectories have not been found in any coordinates except in the action--angle coordinates. In Section \ref{S:SL2} we solve the problem of obtaining trajectories of $SL(2)$ systems in the same setting as for the $SO(4)$ systems.

The author would like to emphasize the role of S.P.Novikov who insisted for years on an elaboration of exact methods for solving Hitchin systems.
\section{Hitchin systems: basic definitions, and separation of variables}\label{S:Hitch}

In this section we give a brief introduction to Hitchin systems, including the necessary results by the author on separation of variables for them. For the beginning, we define Hitchin systems following his work \cite{Hitchin}.

Let $\Sigma$ be a compact genus $g$ Riemann surface endowed with a complex structure, $G$ be a complex semi-simple Lie group, $\g={\mathcal Lie}(G)$, $P_0$ be a smooth principal $G$-bundle on $\Sigma$. We will refer to $\Sigma$ as to a base curve.

A \emph{holomorphic structure} on $P_0$ is a $(0,1)$-connection, i.e. a differential operator on sections of $P_0$ locally given as ${\bar\partial}+\w$ where $\w\in\Omega^{0,1}(\Sigma,\g)$, and a gluing function $\ga$ operates on $\w$ by means a \emph{gauge transformation}: $\w\to \ga\w \ga^{-1}-({\bar\partial}\ga)\ga^{-1}$.

Let $\A$ stay for the space of semi-stable \cite{Hitchin} holomorphic structures on $P_0$, $\mathcal G$ be the group of global smooth gauge transformations. The quotient space $\mathcal N=\A/{\mathcal G}$ is referred to as the moduli space of holomorphic structures on $P_0$. Below, $\mathcal N$ plays the role of \emph{configuration space} of the Hitchin system. A point of $\mathcal N$ is nothing but a holomorphic principal $G$-bundle on $\Sigma$, which will be denoted by $P$. It is known that $\dim{\mathcal N}=\dim\g\cdot(g-1)$.

The phase space of a Hitchin system is defined as the cotangent sheaf  $T^*({\mathcal N})$. Let $T_P(\mathcal N)$ denote the cotangent space at a point $P\in{\mathcal N}$. It follows from the Kodaira--Spencer theory that $T_P(\mathcal N)\simeq H^1(\Sigma,\Ad P)$. Then, by Serr duality, we have $T_P^*(\mathcal N) \simeq H^0(\Sigma,\Ad P\otimes{\mathcal K})$ where $\mathcal K$ is the canonical class on $\Sigma$, $\Ad P$ is the holomorphic vector bundle with the fiber $\g$ associated with the bundle $P$ by means of the adjoint representation of $G$. Let $(P,\Phi)$ stay for a point of $T^*(\mathcal N)$ where $P\in\mathcal N$, $\Phi\in H^0(\Sigma,\Ad P\otimes{\mathcal K})$. Sections of the sheaf $T^*(\mathcal N)$ are referred to as \emph{Higgs fields}.

Let $\chi_\d$ be a homogeneous degree $\d$ invariant polynomial on $\g$. For every $P\in {\mathcal N}$ it gives a map $\chi_\d(P) : H^0(\Sigma,\Ad P\otimes{\mathcal K}) \to H^0(\Sigma,{\mathcal K}^\d)$. If $\Phi$ is a Higgs field, then $\chi_\d(P,\Phi)=(\chi_\d(P))(\Phi(P))$, and further on $\chi_\d(P,\Phi)\in H^0(\Sigma,{\mathcal K}^\d)$. Thus every point  $(P,\Phi)$ of the phase space gets assigned with an element of the space $H^0(\Sigma,{\mathcal K}^\d)$. Let $\{\Omega^\d_j\}$ be some base in $H^0(\Sigma,{\mathcal K}^\d)$. Then $\chi_\d(P,\Phi)=\sum H_{\d,j}(P,\Phi)\Omega^\d_j$ where $H_{\d,j}(P,\Phi)$ is a scalar function on $T^*(\mathcal N)$, for every~$j$ and~$\d$, called \emph{Hitchin Hamiltonian}.
\begin{theorem}[\cite{Hitchin}]\label{T:Hitch}
Hamiltonians $\{ H_{\d,j}\}$ are Poisson commuting on $T^*(\mathcal N)$.
\end{theorem}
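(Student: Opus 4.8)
The plan is to establish Poisson commutativity of the Hitchin Hamiltonians by exhibiting the Hitchin map as an algebraically completely integrable system, following Hitchin's original argument. First I would fix the symplectic structure: on $T^*(\mathcal N)$ there is the canonical (Liouville) symplectic form $\omega$, and the $H_{\d,j}$ are the coordinate functions of the composite map
\[
h\colon T^*(\mathcal N)\longrightarrow \bigoplus_{\d} H^0(\Sigma,\mathcal K^\d),\qquad (P,\Phi)\mapsto \bigl(\chi_\d(P,\Phi)\bigr)_\d ,
\]
so it suffices to prove that $\{h^*f,h^*g\}=0$ for all functions $f,g$ on the target, equivalently that the image of $dh$ at a generic point is an isotropic (in fact Lagrangian) subspace of the cotangent space. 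I would first carry out the dimension count: $\dim T^*(\mathcal N)=2\dim\g\,(g-1)$, and by Riemann--Roch $\sum_\d \dim H^0(\Sigma,\mathcal K^\d)$, summed over the exponents/degrees $\d$ of a chosen homogeneous generating set of invariant polynomials on $\g$, equals $\dim\g\,(g-1)$; this uses $H^1(\Sigma,\mathcal K^\d)=0$ for $\d\ge 2$ and the fact that the degrees $d_i$ of the basic invariants satisfy $\sum_i(2d_i-1)=\dim\g$. So the base of $h$ has exactly half the dimension of the phase space, which is the right count for a Lagrangian fibration.

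Next I would identify the fibers of $h$. The key geometric input is the spectral (or cameral) curve construction: a point in the base of $h$ determines, via the characteristic polynomial of $\Phi$, a branched cover $\widehat\Sigma\to\Sigma$, and the generic fiber of $h$ is (a torsor over) the Jacobian of $\widehat\Sigma$ in the $GL(n)$ case, and over the relevant Prym variety in the case of a simple structure group. In particular the generic fiber is connected of dimension $\dim\g\,(g-1)$, hence equal to the expected dimension. The standard way to conclude isotropy is then a deformation-theoretic computation: a cotangent vector to the fiber direction at $(P,\Phi)$ is represented, via the identifications $T_P\mathcal N\simeq H^1(\Sigma,\Ad P)$ and $T^*_P\mathcal N\simeq H^0(\Sigma,\Ad P\otimes\mathcal K)$ coming from Serre duality, by a pair $(\dot P,\dot\Phi)\in H^1(\Sigma,\Ad P)\oplus H^0(\Sigma,\Ad P\otimes\mathcal K)$ annihilating all $d\chi_\d$; one checks directly that $\omega$ pairs two such vectors to a quantity of the form $\int_\Sigma \mathrm{tr}(\cdots)$ which vanishes because the adjoint-invariant pairing of a Higgs-tangent direction with a Hamiltonian differential is, after a Dolbeault/Čech bookkeeping, exactly $\langle dh,\cdot\rangle$, forced to be zero on the fiber. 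This is the heart of Hitchin's proof and it is essentially a symmetry-of-the-pairing argument for the Killing form on $\g$.

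The main obstacle, and the step I would expect to demand the most care, is precisely this last verification that the fiber tangent space is isotropic, because it requires a clean manipulation of the Serre-duality identifications together with the cup product $H^1(\Sigma,\Ad P)\otimes H^0(\Sigma,\Ad P\otimes\mathcal K)\to H^1(\Sigma,\mathcal K)\simeq\C$ and the invariance of $\chi_\d$ under $\mathrm{Ad}$; one must check that the obvious quadratic form appearing in the symplectic pairing coincides, up to sign, with the symmetrization of $d^2\chi_\d$, which vanishes on the common kernel. An alternative route that avoids some of this bookkeeping is to invoke the algebraic-integrability package directly: once one knows (as recalled in the excerpt and proved in the author's cited works \cite{Sh_FAN_2019,Aint_sys}) that the Hamiltonians are exactly the coefficients of the spectral curve and that the system's flows linearize on the Jacobian (resp. Prymian) of the spectral curve, commutativity is automatic since commuting linear flows on a torus Poisson-commute; this reduces Theorem~\ref{T:Hitch} to the linearization statement, which is the substance of those references. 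I would present the self-contained deformation-theoretic proof as the primary argument and mention the spectral-curve shortcut as a remark, since the later sections of the paper rely on the spectral-curve picture anyway.
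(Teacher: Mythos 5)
Your proposal is a faithful reconstruction of Hitchin's original argument, and in outline it is correct: the dimension count via Riemann--Roch and the identity $\sum_i(2d_i-1)=\dim\g$, the identification of generic fibers of the Hitchin map with Jacobians (resp.\ Prymians) of spectral curves, and the Serre-duality/cup-product verification that the fibers are Lagrangian are exactly the ingredients of the proof in \cite{Hitchin}. However, this is not the route the paper takes: Theorem~\ref{T:Hitch} is quoted there without proof, and the commutativity of the Hamiltonians is instead re-derived (for the class of hyperelliptic base curves, which is all the paper needs) by an entirely elementary argument. The paper first expresses the Hamiltonians implicitly through the separating relations \eqref{E:sep}, in which the $i$-th relation involves only the single triple $(x_i,y_i,\l_i)$, and then invokes Proposition~\ref{P:comm}: differentiating the relations \eqref{E:non_sys} with respect to $\l_k$ and $z_k$ shows that the two gradient vectors $(\partial H_j/\partial\l_k)_j$ and $(\partial H_j/\partial z_k)_j$ satisfy the same corank-one linear system, hence are proportional, so all their $2\times 2$ Pl\"ucker minors vanish and the bracket \eqref{E:Poisson} of any two Hamiltonians is zero. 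The trade-off is clear: your deformation-theoretic proof works for an arbitrary base curve and any semisimple $G$ but requires the full moduli-theoretic machinery (and, as you note, the isotropy verification is delicate); the paper's separation-of-variables proof is essentially linear algebra and needs no moduli spaces at all, but it presupposes an explicit description of the spectral curve and so is tied to the hyperelliptic setting. One caution about your proposed ``shortcut'': deducing commutativity from linearization of the flows on the Prymian is circular in the usual logical order, since the linearization is normally established using the Liouville--Arnold theorem, i.e.\ after commutativity is known; the paper avoids this by making Proposition~\ref{P:comm} the logically primary step.
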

We'll notice right away that for an effective solution of a Hitchin system the base $\{\Omega^\d_j\}$ should be explicitly pointed out. For the last, in turn, the class of curves $\Sigma$ should be specified, which is exactly what is done below.

We choose and fix a holomorphic 1-form $\xi$ on $\Sigma$. For a given Higgs field $\Phi$ its characteristic polynomial is well-defined (independently of gluing functions in the bundle $\Ad P$) as
\begin{equation}\label{E:spec}
 \det\left(\l-\frac{\Phi}{\xi}\right)= \l^d+\sum_{j=1}^n \l^{d-d_j} r_j
\end{equation}
where $d$ is a rank of the bundle, $n$ is a rank of the Lie algebra $\g$. We define the spectral curve of the Higgs field $\Phi$ by means of the equation
\[
   \l^d+\sum_{j=1}^n \l^{d-d_j} r_j =0
\]
For classical simple Lie algebras (and for $\gl(n)$), $r_j$  stays for basis invariant polynomials of the matrix $\Phi/\xi$, except for  $\g=\so(2n)$, for which $r_n$ is a square of the base invariant, namely of Pfaffian.

For any divisor $D$ let $\mathcal{O}(D)$ be the space of meromorphic functions $f$ on the spectral curve such that $(f)+D\ge 0$. Below, $D= (\xi)$ is a divisor of the differential $\xi$. Obviously, a degree $\d$ basis invariant of the matrix $\Phi/\xi$ belongs to the space
${\mathcal O}(\d D)$. It is not difficult to prove also that the correspondence between the gauge equivalence classes of Higgs fields on the one hand side, and their spectral curves on the other hand side, is bijective \cite{Sh_Bin}. If a class of base curves is given, these remarks enable us to obtain a description of the family of spectral curves corresponding to arbitrary Higgs fields.

In what follows, we regard to $\Sigma$ as to a hyperelliptic curve with an affine part of the form $y^2=P_{2g+1}(x)$, possessing a fixed holomorphic differential $\displaystyle \xi=\frac{dx}{y}$. Then $D=2(g-1)\cdot\infty$, and a base in the space ${\mathcal O}(\d D)$ is formed by the functions
$1,x,\ldots,x^{\d(g-1)}$ and $y,yx,\ldots,yx^{(\d-1)(g-1)-2}$ \cite{Sh_FAN_2019}. Expanding the coefficients $r_j$ in the equation \refE{spec} over those bases, we obtain
\begin{proposition}[\cite{Sh_FAN_2019}]\label{P:aspec}
Affine part of the spectral curve of a Hitchin system with a simple classical Lie algebra $\g$ (or $\g=\gl(n)$) is a full intersection of two surfaces in $\C^3$: $y^2=P_{2g+1}(x)$ and $R(\l,x,y,H)=0$, where
\begin{equation}\label{E:sp_curve_eq}
   R=\l^d+\sum_{j=1}^n \left(\sum_{k=0}^{\d_j(g-1)}H_{jk}^{(0)}x^k+ \sum_{s=0}^{(\d_j-1)(g-1)-2}H_{js}^{(1)}x^sy\right)\l^{d-d_j},
\end{equation}
$\d_j$ is the order of the $j$th basis invariant (i.e. $\d_j=d_j$, except for $\g=\so(2n)$, $j=n$, when $\d_j=d_n/2=n$), $H_{jk}^{(0)}$, $H_{js}^{(1)}\in\C$. For $\g=\so(2n)$ the bracket in the last summand has to be squared.
\end{proposition}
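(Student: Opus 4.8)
The plan is to combine the defining relation \refE{spec} of the spectral curve with an explicit basis of the function space $\mathcal{O}(\d D)$ on the hyperelliptic base curve, expand each coefficient $r_j$ in that basis, and read off the shape \refE{sp_curve_eq} of the resulting polynomial $R$. First I would fix the geometry of the base: with $\Sigma : y^2 = P_{2g+1}(x)$ and $\xi = dx/y$, the odd degree $2g+1$ forces a single point $\infty$ lying over $x=\infty$, and taking $x = t^{-2}$, $y \sim t^{-(2g+1)}$ as local data at $\infty$ one computes $(\xi) = (2g-2)\cdot\infty$, so $D := (\xi) = 2(g-1)\cdot\infty$. Since $\Phi$ is a \emph{holomorphic} Higgs field, in any local trivialization of $\Ad P$ the matrix $\Phi/\xi$ has poles only at $\infty$, bounded by $D$; consequently a basis invariant $r_j$ of $\Phi/\xi$, being homogeneous of degree $\d_j$ in the matrix entries and gauge-independent, is a global meromorphic function on $\Sigma$ lying in $\mathcal{O}(\d_j D)$, exactly as observed before the statement.

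Next I would produce the basis of $\mathcal{O}(\d D)$. On $\Sigma$ the function $x$ has a double pole at $\infty$ and $y$ a pole of order $2g+1$ there, so the functions
\[
  1,\ x,\ \ldots,\ x^{\d(g-1)}, \qquad y,\ xy,\ \ldots,\ x^{(\d-1)(g-1)-2}y
\]
have pairwise distinct pole orders at $\infty$, all at most $2\d(g-1) = \deg(\d D)$; hence they lie in $\mathcal{O}(\d D)$ and are linearly independent, and a Riemann--Roch count shows their number equals $\dim\mathcal{O}(\d D)$ (the divisor $\d D$ being nonspecial for $\d\ge 2$, with $\d=1$ the canonical case — relevant only for the degree-$1$ invariant of $\gl(n)$ — treated directly), so they form a basis. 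Expanding each $r_j$ in it gives
\[
  r_j = \sum_{k=0}^{\d_j(g-1)} H_{jk}^{(0)} x^k + \sum_{s=0}^{(\d_j-1)(g-1)-2} H_{js}^{(1)} x^s y, \qquad H_{jk}^{(0)},\, H_{js}^{(1)}\in\C,
\]
and substituting into the spectral curve equation $\l^d + \sum_j \l^{d-d_j} r_j = 0$ yields precisely $R(\l,x,y,H)=0$ with $R$ as in \refE{sp_curve_eq}. The only deviation is $\g=\so(2n)$: there the $n$-th invariant is the Pfaffian, a degree-$n$ object expanded as above with $\d_n = n$, but it enters the characteristic polynomial \refE{spec} squared — as the coefficient of $\l^{d-2n}=\l^0$, since $d=2n$ — which is why the last bracket must be squared.

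Finally, for the ``full intersection'' (i.e.\ complete intersection) assertion I would run a dimension count in $\C^3$ with coordinates $(\l,x,y)$. The spectral curve is the common zero locus of the two polynomials $y^2 - P_{2g+1}(x)$ and $R(\l,x,y,H)$: the first cuts out the irreducible affine surface $\Sigma\setminus\{\infty\}$ (a cylinder over the affine base), and on it $R=0$ describes, for generic $H$, the $d$-sheeted branched cover of $\Sigma$ attached to the characteristic polynomial. In particular the projection $(\l,x,y)\mapsto(x,y)$ is finite of degree $d$ on $\{R=0\}$, so $\{y^2 = P_{2g+1}(x)\}\cap\{R=0\}$ can contain no component of dimension $\ge 2$, the two surfaces meet properly, and the spectral curve is their complete intersection. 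I expect this last properness statement — rather than the routine Riemann--Roch bookkeeping or the substitution, which are mechanical — to be the one step needing genuine argument, though it follows at once from the finiteness of the projection onto $\Sigma$.
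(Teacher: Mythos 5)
Your proposal is correct and follows essentially the same route as the paper: the key observations that $D=(\xi)=2(g-1)\cdot\infty$, that each basis invariant $r_j$ of $\Phi/\xi$ lies in $\mathcal{O}(\d_j D)$, and that $1,x,\dots,x^{\d(g-1)},y,xy,\dots,x^{(\d-1)(g-1)-2}y$ is a basis of $\mathcal{O}(\d D)$, followed by expansion of the $r_j$ and substitution into the characteristic equation, are exactly the paper's argument. Your additional verifications (distinct pole orders plus Riemann--Roch for the basis, the squared Pfaffian for $\so(2n)$, and finiteness of the projection for the complete-intersection claim) are correct fillings-in of steps the paper leaves implicit.
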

The coefficients $H_{jk}^{(0)}$, $H_{js}^{(1)}$ regarded to as functions of a bundle $P$ exactly coincide with the Hitchin Hamiltonians. Indeed, they are obtained as coefficients of the expansion of basis invariants of the matrix $\Phi/\xi$ over bases in the space ${\mathcal O}(\d D)$, while the above definition due to Hitchin assumes an expansion of invariants of $\Phi$ over the bases in the spaces $H^0(\Sigma,{\mathcal K}^{\d})$. These are obviously equivalent procedures because  $H^0(\Sigma,{\mathcal K}^{\d})\simeq {\mathcal O}(\d D)\xi^{\otimes \d}$ for all $\d$.

Every curve can be given by a set of points it passes through. We show that giving this way the spectral curve we can completely define a Hitchin system. The idea goes back to \cite{GNR,BT2}, however it is necessary to specify the class of base curves to implement it. For the class of hyperelliptic curves it has been done in \cite{Aint_sys,Sh_FAN_2019}.

Observe that the number of coefficients in every summand in \refE{sp_curve_eq} is equal to $(2\d_j-1)(g-1)$. In the case of a simple Lie algebra $\g$, by wellknown Kostant identity, $\sum_{j=1}^n(2\d_j-1)=\dim\g$ (recall that $\d_j$ stay for the degrees of basis invariants of the Lie algebra $\g$). Denote the total number of coefficients in the equation of the spectral curve by $h$, we obtain $h=\dim\g(g-1)$. For $\g=\gl(n)$, taking account of the degree 1 invariant, we have $h=n^2(g-1)+1$ .

Giving $h$ points $\ga_i=(x_i, y_i, \l_i) ~(i = 1, \ldots, h)$, where $y_i^2 = P_{2g+1}(x_i)$ for all $i$, we can express the Hamiltonians via those points from the requirement of passing the curve through them:
\begin{equation}\label{E:sep}
  R(x_i, y_i, \l_i,H)=0,\quad i=1,\ldots,h,
\end{equation}
where every equation contains only one triple $(x_i, y_i, \l_i)$. Conventionally, such kind of equations are referred to as separating relations, while  $x_i, y_i, \l_i$ are referred to as separating variables. We emphasize that for $\g=\gl(n), \sln(n), \spn(2n), \so(2n+1)$ the system of equations \refE{sep} is linear in $H$, while in the case $\g=\so(2n)$ it is quadratic. Nevertheless, for $n=2$ it can be effectively resolved (Proposition \ref{P:PBor} below).

It is known \cite{GNR} that the action--angle variables $(I,\phi)$ of a Hitchin system can be defined from the relations
\begin{equation}\label{E:act_ang}
   \l dz=\sum_{a=1}^{h} I^a\Omega_a,\quad
   \phi_a=\sum_{i=1}^h \int^{\ga_i}\Omega_a,
\end{equation}
where $z=\int^{\ga}\widehat{\xi}$ is a (quasiglobal) coordinate on the spectral curve, $\widehat{\xi}$ is a pull-back of the differential $\xi$ on the spectral curve, $\Omega_a$ are normalized holomorphic differentials on it. In the case of a simple Lie algebra $\g$ the equation of the spectral curve contains only even degrees of $\l$, hence it is invariant with respect to the involution $\l\to -\l$. In this case $\Omega_a$ are normalized holomorphic Prym differentials. Observe that then $h=\dim\g(g-1)$ exactly is equal to to the dimension of the Prym variety of the spectral curve.

In the action-angle variables the symplectic form has the canonical form:
\begin{equation}\label{E:du}
  \w=\sum_{a=1}^h dI^a\wedge d\phi_a .
\end{equation}
From the second of the relations \refE{act_ang} we have $d\phi_a=\sum_{i=1}^h \Omega_a(\ga_i)$, hence
\begin{equation}\label{E:Sympl1}
 \w = \sum_{a=1}^h \d I^a\wedge \sum_{i=1}^h\Omega_a(\ga_i).
\end{equation}
The external differentiation of the first of the relations \refE{act_ang} gives $d\l\wedge dz=\sum_{a=1}^{h} dI^a\wedge \Omega_a$ which implies $\sum_{i=1}^h d\l_i\wedge dz_i=\sum_{i=1}^h\sum_{a=1}^{h} dI^a\wedge \Omega_a(\ga_i)$. The last expression coincides with the right hand side of \refE{Sympl1}, hence $\w=\sum_{i=1}^h d\l_i\wedge dz_i$. Since $dz=\widehat\xi$ where $\xi=\frac{dx}{y}$, in the above separation variables we have
\begin{equation}\label{E:Sympl2}
 \w= \sum_{i=1}^h d\l_i\wedge  \frac{d x_i}{y_i}.
\end{equation}
The corresponding Poisson bracket is equal to
\begin{equation}\label{E:Poisson}
       \{ \l_i,x_j\}=\d_{ij} y_i
\end{equation}
(where the brackets of all other pairs of variables vanishes).

Relations \refE{sep} and \refE{Sympl2} completely define a Hitchin system on a hyperelliptic curve, independently of its original definition. Is it possible to prove integrability of such kind a system not addressing \refT{Hitch}? We will do that based on the following proposition.
Consider a system of equations
\begin{equation}\label{E:non_sys}
F_i(H_1,\ldots,H_h,\l_i,z_i)=0,\quad i=1,\ldots,h
\end{equation}
where $F_i$ are smooth functions. Then $H_j$, $j=1,\ldots,h$ are defined as functions of $(\l_1,\ldots,\l_h,z_1,\ldots,z_h)$ by the relations \refE{non_sys}. We stress that for every $i=1,\ldots,h$ the function $F_i$ is explicitly depending on only one pair of variables $\l_i,z_i$.

Consider a Poisson bracket of the following form on $\C^{2h}$:
\begin{equation}\label{E:Puass}
\{ f,g\}=\sum\limits_{j=1}^hp_j\left(\frac{\partial f}{\partial \l_j}\frac{\partial g}{\partial z_j}-\frac{\partial g}{\partial \l_j}\frac{\partial f}{\partial z_j}\right),
\end{equation}
where $p_j=p_j(\l_j,z_j)$ are smooth functions of only one pair of variables.
\begin{proposition}[\cite{Aint_sys}]\label{P:comm}
In a domain where $\rank\frac{\partial(F_1,\ldots,F_{k-1},F_{k+1}\ldots,F_h)} {\partial(H_1,\ldots,H_{k-1},H_k,H_{k+1}\ldots,H_h)}=n-1$,
$H_1,\ldots,H_h$ commute with respect to every Poisson bracket of the above form.
\end{proposition}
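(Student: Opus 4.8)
The plan is to differentiate the defining relations \refE{non_sys} implicitly and feed the result into the bracket formula \refE{Puass}. First I would fix indices $k\neq\ell$ and compute $\{H_k,H_\ell\}$. Since each $F_i=F_i(H_1,\dots,H_h,\l_i,z_i)$ vanishes identically once the $H_j$ are viewed as functions of $(\l_1,\dots,\l_h,z_1,\dots,z_h)$, applying $\partial/\partial\l_j$ and $\partial/\partial z_j$ to $F_i\equiv 0$ gives the linear systems
\begin{equation*}
\sum_{m=1}^h \frac{\partial F_i}{\partial H_m}\,\frac{\partial H_m}{\partial \l_j}
= -\,\d_{ij}\,\frac{\partial F_i}{\partial \l_i},
\qquad
\sum_{m=1}^h \frac{\partial F_i}{\partial H_m}\,\frac{\partial H_m}{\partial z_j}
= -\,\d_{ij}\,\frac{\partial F_i}{\partial z_i},
\end{equation*}
where the Kronecker $\d_{ij}$ appears precisely because $F_i$ depends on the single pair $(\l_i,z_i)$ and on no other $\l_j,z_j$. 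Denote by $J$ the $h\times h$ matrix with entries $J_{im}=\partial F_i/\partial H_m$. The hypothesis that the $(h-1)\times(h-1)$ minor obtained by deleting row $k$ (any $k$) has rank $h-1$ (written $n-1$ in the statement, with $n$ there playing the role of $h$) is exactly the condition under which these systems can be solved for the partials $\partial H_m/\partial\l_j$ and $\partial H_m/\partial z_j$ up to the kernel direction of $J^{\mathrm T}$; in fact $J$ is invertible on the relevant domain and one gets $\partial H_m/\partial\l_j = -(J^{-1})_{mj}\,\partial F_j/\partial\l_j$ and likewise $\partial H_m/\partial z_j = -(J^{-1})_{mj}\,\partial F_j/\partial z_j$.

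Next I substitute these expressions into \refE{Puass} with $f=H_k$, $g=H_\ell$:
\begin{equation*}
\{H_k,H_\ell\}
=\sum_{j=1}^h p_j\!\left(
\frac{\partial H_k}{\partial\l_j}\frac{\partial H_\ell}{\partial z_j}
-\frac{\partial H_\ell}{\partial\l_j}\frac{\partial H_k}{\partial z_j}\right)
=\sum_{j=1}^h p_j\,(J^{-1})_{kj}(J^{-1})_{\ell j}
\left(\frac{\partial F_j}{\partial\l_j}\frac{\partial F_j}{\partial z_j}
-\frac{\partial F_j}{\partial\l_j}\frac{\partial F_j}{\partial z_j}\right).
\end{equation*}
The factor in parentheses in the last sum is identically zero term by term, because in each summand the $\l_j$- and $z_j$-derivatives of the same function $F_j$ appear in both products. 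Hence $\{H_k,H_\ell\}=0$ for all $k,\ell$, which is the assertion. The whole computation is insensitive to the particular $p_j=p_j(\l_j,z_j)$, so commutativity holds with respect to every bracket of the form \refE{Puass} simultaneously; in particular for the bracket dual to \refE{Sympl2}.

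The only genuinely delicate point is the invertibility of $J$ (equivalently, the passage from the stated rank-$(h-1)$ condition on the deleted-row minor to a statement that lets one solve for all the partials). I would handle this by noting that the $h$ relations $F_i=0$ implicitly define $h$ functions $H_1,\dots,H_h$ of the $2h$ arguments, so by the implicit function theorem the full Jacobian $\det J\neq 0$ on the domain where the definition is valid; the rank-$(h-1)$ hypothesis on every $(h-1)$-minor is then a slightly weaker, more symmetric formulation that still forces the key cancellation above (one may project onto the $h-1$ equations $F_i=0$, $i\neq k$, solve for the $H_m$, $m\neq k$, in terms of $H_k$ and the coordinates, and observe that the cancellation is local in $j$ and does not see which equation was dropped). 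Everything else is the routine chain-rule bookkeeping sketched above, so no further computation is needed.
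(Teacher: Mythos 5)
Your computation is essentially the paper's: both proofs reduce commutativity to the term-by-term vanishing of each summand of \refE{Puass}, which in turn follows from the proportionality of the two vectors $\left(\frac{\partial H_m}{\partial \l_j}\right)_m$ and $\left(\frac{\partial H_m}{\partial z_j}\right)_m$ for each fixed $j$. The difference lies in how that proportionality is obtained. You invert the full $h\times h$ Jacobian $J=(\partial F_i/\partial H_m)$ and read off that both vectors are multiples of the $j$-th column of $J^{-1}$; this requires $\det J\neq 0$, which you justify by an appeal to the implicit function theorem. The paper instead differentiates only the $h-1$ equations $F_i=0$ with $i\neq j$ in $\l_j$ and in $z_j$: since those $F_i$ do not involve $\l_j,z_j$ at all, the resulting linear systems are homogeneous, so both vectors lie in the kernel of the $(h-1)\times h$ matrix $(\partial F_i/\partial H_m)_{i\neq j}$, and the stated rank-$(h-1)$ hypothesis (the ``$n-1$'' in the statement is evidently $h-1$, as you noted) makes that kernel one-dimensional; hence the vectors are linearly dependent and all their $2\times 2$ Pl\"ucker minors vanish. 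The paper's route therefore uses exactly the hypothesis as stated, which is strictly weaker than invertibility of $J$, and avoids the implicit-function-theorem detour; your route produces the same cancellation but under the marginally stronger assumption $\det J\neq 0$. This is not a fatal gap --- for the $H_j$ to be well-defined smooth functions of $(\l,z)$ one would in practice assume $\det J\neq 0$ anyway, and that assumption implies the rank condition --- but to prove the proposition precisely as stated you should replace the inversion of $J$ by the kernel argument on the reduced $(h-1)\times h$ system.
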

Indeed, by differentiating the equations \refE{non_sys}, for $i\ne k$, first in $\l_k$, and then in $z_k$, we find that the vectors $\mathbf{x_1}=(\frac{\partial H_1}{\partial \l_k}, \ldots, \frac{\partial H_h}{\partial \l_k})$ and $\mathbf{x_2}=(\frac{\partial H_1}{\partial z_k}, \ldots, \frac{\partial H_h}{\partial z_k})$ satisfy the same system of linear equations $\frac{\partial F_i}{\partial H_1}x_1+ \ldots + \frac{\partial F_i}{\partial H_h}x_h=0$, $i\ne k$. By the condition on the rank of the system these vectors are linear dependent, hence their Pl\"{u}cker coordinates vanish: $\begin{vmatrix}
  \frac{\partial H_p}{\partial \l_k} & \frac{\partial H_q}{\partial \l_k}\\
  \frac{\partial H_p}{\partial z_k} & \frac{\partial H_q}{\partial z_k}
                           \end{vmatrix}=0
$
for all $p,q=1,\ldots,h$.

Particular cases of \refP{comm} have been formulated in a number of papers starting from the first half of the 20th century (for a somewhat incomplete bibliography see \cite{Aint_sys}), and were given complicated proofs. In the context of symplectic geometry, when all $p_i$ are not equal to $0$, \refP{comm} is equivalent to the statement that tangent vector fields to a Lagrangian submanifold commute (in our case the Lagrangian submanifold is equal to the product of curves given by the equations  \refE{non_sys} for fixed $H_j$). The last has been used to prove integrability of the systems given by equations of the form  \refE{non_sys} in \cite{Tsiganov}, for example. In our form, \refP{comm} is stronger, and is true, for example, in the case when only one of the coefficients $p_i$ is nonzero.

Coming back to Hitchin systems on hyperelliptic curves, observe that their Poisson bracket \refE{Poisson} satisfies the conditions of \refP{comm}, hence their integrability is proven.

\section{Hitchin systems with the structure group $SO(4)$, on genus 2 curves}\label{S:so4}
We define the system by means of its spectral curve. According to \refP{aspec} we give the curve by two equations in $\C^3$:
\begin{equation}\label{E:spec_so4}
    R(\l, x, y, H)=\lambda^4 + \lambda^2p + q^2 = 0,\quad \text{and}\quad y^2=P_5(x)
\end{equation}
where $p=H_0 + xH_1 + x^2H_2$, $q=H_3 + xH_4 + x^2H_5$, $P_5=x^5+\ldots$ is a degree 5 polynomial. The curve is singular. Its singular points are given by the following system of equations completed with the equations \refE{spec_so4}:
\begin{equation}\label{E:origso4}
\left \{
\begin{array}{c}
     R'_{\l}(\l, x, y, H) = \l(4\l^2 + 2p)=0,\\
     R'_{x}(\l, x, y, H) = \l^2\cdot p'_{x} + 2q\cdot q'_{x} = 0.
     \end{array}
\right.
\end{equation}
In a generic position, the curve has four singular points given by $\l=0$, $q = 0$.

For $\l^2=-p/2$, and generic $H$, the second of the equations  \refE{origso4} is not satisfied, hence we obtain  branch points. By \refE{spec_so4} they satisfy the following system of equations:
\begin{equation}\label{E:br_p}
\left \{
\begin{array}{c}
     \l^2 = -\frac{p}{2}, \\
     q = \pm \frac{p}{2}.
     \end{array}
\right.
\end{equation}
Takin account of symmetries in $y$ and $\l$, we obtain 16 branch points.

We define the phase space of the Hitchin system with the structure group $SO(4)$ on the curve $y^2=P_5(x)$ as the 12-dimensional space of sets
$\{ (\l_i,x_i,y_i)|i=1,\ldots,6 \}$ where every triple $(\l_i,x_i,y_i)$ satisfies to the second of equations \refE{spec_so4}. The Poisson bracket is defined as $\{\l_i,x_i\}=y_i$ ($i=1,\ldots,6$), while the brackets between all other pairs of variables are set to zero. The set of six Hamiltonians of the system $H=\{ H_1,\ldots, H_6\}$ is defined via phase variables from the system of equations $R(\l_i,x_i,y_i,H)=0$ ($i=1,\ldots,6$) which are nothing but separating relations. Integrability of the constructed this way system had been proven in the last section (see also \cite{Aint_sys}, \cite{Sh_FAN_2019} and \cite{BorSh}). For Hitchin systems with the structure group $SO(2n)$, the separating relations are quadratic in Hamiltonians. However, for $n=2$ they can be explicitly resolved thanks to the following statement.
\begin{proposition}[\cite{PBor}]\label{P:PBor}
For $\g = \so(4)$, $g=2$ the system of separation relations descends to an algebraic equation of degree four with one unknown, hence is solvable in radicals.
\end{proposition}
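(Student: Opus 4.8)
The plan is to reduce the six separating relations to a linear elimination followed by a single intersection with a Veronese surface. Write $p(x)=H_0+xH_1+x^2H_2$ and $q(x)=H_3+xH_4+x^2H_5$ as in \refE{spec_so4}; the $i$-th relation reads $\l_i^4+\l_i^2\,p(x_i)+q(x_i)^2=0$. It is \emph{linear} in $(H_0,H_1,H_2)$, and, since $H_3,H_4,H_5$ enter \refE{spec_so4} only through the square $q^2$, it depends on them only through the six products $w_{jk}:=H_jH_k$, $3\le j\le k\le 5$.

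First I would discard the degenerate loci (some $\l_i=0$, coinciding $x_i$, branch points \refE{br_p}), so that after relabelling $\l_1,\l_2,\l_3\ne 0$ and $x_1,x_2,x_3$ are pairwise distinct. Using the relations for $i=1,2,3$, solve \emph{linearly} for the quadratic $p$ by Lagrange interpolation through the values $-\l_j^2-q(x_j)^2/\l_j^2$ at $x=x_j$; this expresses $(H_0,H_1,H_2)$ as affine-linear functions of the $w_{jk}$, with coefficients determined by $\{(\l_k,x_k)\}$. Plugging this $p$ into the remaining relations ($i=4,5,6$) kills every linear term and leaves three equations of the shape $\langle N_i,w\rangle=c_i$, with $N_i$ a symmetric $3\times3$ matrix and $c_i$ a constant, all built from the phase data.

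The crucial step: the triple $\langle N_4,w\rangle=c_4$, $\langle N_5,w\rangle=c_5$, $\langle N_6,w\rangle=c_6$ is a system of three \emph{affine-linear} equations in $w$, whereas the vector $w$ is constrained to the $3$-dimensional affine cone $X$ over the Veronese surface $v_2(\mathbb P^2)\subset\mathbb P^5$ (the cone of rank-one symmetric $3\times 3$ matrices). Since $v_2(\mathbb P^2)$ spans $\mathbb P^5$ and has degree $4$, a generic codimension-$3$ affine section of $X$ consists of exactly four points; hence for generic phase data there are four solutions $w$, i.e.\ four spectral curves: each $w$ fixes $(H_0,H_1,H_2)$ via the linear step and the triple $(H_3,H_4,H_5)$ up to the sign $q\mapsto-q$ of the Pfaffian, which leaves \refE{spec_so4} and this $p$ unchanged. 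Picking a coordinate that separates the four points, say $t:=w_{55}=H_5^2$, one obtains a degree-four equation for $t$ with coefficients rational in the phase data; the other $w_{jk}$ are then rational in $t$, the triple $(H_3,H_4,H_5)$ is recovered up to sign by one further square root, and $(H_0,H_1,H_2)$ follows rationally. A quartic being solvable in radicals, so is the whole system.

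The hard part will be the genericity bookkeeping: one must check that for a generic point of the phase space the affine subspace $\{\langle N_i,w\rangle=c_i\}_{i=4,5,6}$ meets $X$ transversally in four \emph{distinct} points, and that one coordinate separates them, so that the reduced unknown indeed satisfies an irreducible quartic and the remaining Hamiltonians are rational functions of it. Because $X$ is irreducible and non-degenerate, all of these are open conditions, so it suffices either to exhibit one explicit configuration $(x_i,\l_i)$ where they hold, or to exploit the special form of the $N_i$, which are assembled from the rank-one forms $v_i v_i^{T}$, $v_i=(1,x_i,x_i^2)$, rescaled by $\l_i^{-2}$. I expect the explicit-example route to be the shorter one; the linear elimination, the identification with the Veronese cone, and the radical-solvability of the quartic are then routine.
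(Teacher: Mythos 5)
The paper offers no proof of this Proposition: it is imported verbatim from Borisova's note \cite{PBor}, so your argument can only be judged on its own terms, not against an in-text proof. On those terms your route is sound and is, at the very least, a conceptually clean packaging of the elimination that must underlie \cite{PBor}: each separating relation $\l_i^4+\l_i^2p(x_i)+q(x_i)^2=0$ is affine-linear in $(H_0,H_1,H_2)$ and in the six products $w_{jk}=H_jH_k$; three of the relations determine $p$ by interpolation (legitimately, since the coefficient vectors $\l_i^2(1,x_i,x_i^2)$ are independent once $\l_i\ne0$ and the $x_i$ are distinct); and the remaining three become an affine-linear codimension-three section of the cone of rank-one symmetric $3\times3$ matrices, whose degree is that of $v_2(\mathbb{P}^2)$, namely $4$. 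The count is corroborated by Bezout: three quadrics in $(H_3,H_4,H_5)$ give $2^3=8$ solutions, paired into $4$ by the Pfaffian sign $q\mapsto -q$, which fixes the spectral curve.

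The one genuine incompleteness is the point you flag yourself: nothing written so far certifies that, for the particular matrices $N_i$ arising from $v_iv_i^{T}$ and the interpolation weights, the three affine conditions are independent, the section avoids the vertex and the Veronese at infinity, and the four intersection points are distinct and separated by the chosen coordinate $w_{55}$. These are Zariski-open conditions in the twelve parameters $(x_i,\l_i)$, so one explicit configuration closes the gap, but until it is exhibited (or nondegeneracy is extracted from the rank-one structure of the $N_i$) you have shown only that the expected degree is four, not that the system actually descends to a quartic on a dense open set of the phase space. I would also state explicitly, rather than in passing, that recovering $(H_3,H_4,H_5)$ from a rank-one $w$ costs one further square root plus an immaterial sign, so that solvability in radicals of the whole system, and not merely of $w$, follows. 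With the example supplied, the argument is complete.
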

According to \cite{Hitchin}, trajectories of a Hitchin system with the structure group $SO(2n)$ get linear on the Prymian of the normalized spectral curve. We pass to the description of the last.

For a singular curve, the Riemann--Hurwitz theorem gives a genus of the curve. Define the genus $\widehat{g}$, then for our spectral curve
\[
   2\widehat{g} - 2 = 4(2g - 2) + 16
\]
which gives $\widehat{g} = 13$ for $g=2$.

The normalized curve is a 4-fold branch cover of the base curve. If it is regarded as a cover of Riemann sphere then it is a 8-fold cover whose every sheet contains 10 branch points (by means of six of them a genus 2 curve is glued from 2 spheres, while another four serve for gluing the genus 2 sheets together). The normalized curve is drown schematically at Figure \ref{so4_pic}.
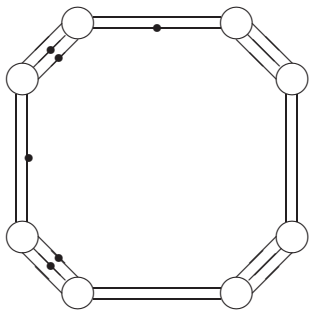
\begin{figure}[h]
\begin{picture}(200,130)
\unitlength=0.6pt
\thicklines
\thinlines
\put(55,20){\circle{20}}
\put(20,55){\circle{20}}
\put(47,28){\line(-1,1){20}}
\put(55,30){\line(-1,1){25}}
\put(45,20){\line(-1,1){25}}
\put(155,20){\circle{20}}
\put(190,55){\circle{20}}
\put(163,28){\line(1,1){20}}
\put(155,30){\line(1,1){25}}
\put(165,20){\line(1,1){25}}
\put(65,23){\line(1,0){80}}
\put(65,16){\line(1,0){80}}
%
\put(55,190){\circle{20}}
\put(20,155){\circle{20}}
\put(47,182){\line(-1,-1){20}}
\put(55,180){\line(-1,-1){25}}
\put(45,190){\line(-1,-1){25}}
\put(155,190){\circle{20}}
\put(190,155){\circle{20}}
\put(162,183){\line(1,-1){20}}
\put(155,180){\line(1,-1){25}}
\put(165,190){\line(1,-1){25}}
\put(65,187){\line(1,0){80}}
\put(65,194){\line(1,0){80}}
\put(23,65){\line(0,1){80}}
\put(16,65){\line(0,1){80}}
\put(193,65){\line(0,1){80}}
\put(186,65){\line(0,1){80}}
\put(105,187){\circle*{5}}
\put(24,105){\circle*{5}}
\put(43,168){\circle*{5}}
\put(38,173){\circle*{5}}
\put(43,42){\circle*{5}}
\put(38,37){\circle*{5}}

\end{picture}
\caption{Normalized sprctral curve}\label{so4_pic}
\end{figure}
The spectral curve \refE{spec_so4} is invariant with respect to the involution $\tau_1:\l\to-\l$. The last is represenred by by a rotation of the Figure \ref{so4_pic} around its center by an angle~$\pi$. On the normalized spectral curve, the involutions operates without fixed points. The singular points of the spectral curve are fixed but their preimages after normalization are permuted by the involution. At the Figure, the preimages are located in the middles of the tubes corresponding to the horizontal segments, two points per tube. The normalization map can be thought as gluing of points on the opposite horizontal segments.

Below, we need an explicit description of the Abel--Prym map of our spectral curve. It is given by the following relation
\[
   \A(\l_1,x_1,y_1\ldots,\l_6,x_6,y_6)=\sum_{i-1}^6\int_\infty^{(\l_i,x_i,y_i)}\w ,
\]
where $\w$ is the set of six normalized Prym differentials on the spectral curve, organized in a $6\times 1$ (a column). A base of holomorphic Prym differentials is given by the following list \cite{BorSh}:
\begin{equation}\label{E:dprym}
\begin{aligned}
       &\omega^{(0)}_i = \frac{x^{i-1}q(x)dx}{y\l(4\l^2 + 2p(x))},~ i = 1, 2, 3,\\
       &\omega^{(1)}_i = \frac{\l^2x^{i-4}dx}{y\l(4\l^2 + 2p(x))},~ i = 4, 5, 6.
\end{aligned}
\end{equation}
As it follows from the results of \cite{BorSh} (see also \cite{Dubr_RegCh}), normalization of the differentials \refE{dprym} amounts in computation of their integrals along the contours connecting pairs of branch points presented as fat points on Figure \ref{so4_pic}. As it is pointed out above, the branch points are explicitly given by equations \refE{br_p}, hence the normalized differentials can be found out explicitly also, as well as the action--angle coordinates.

\section{The quadrature formula, and its reversion}\label{S:Inv_AP}
In the action--angle variables $(I,\phi)$, the symplectic form $\w$ has a canonical form \refE{du}. Hence the trajectories of motion are straight lines in these variables:
\begin{equation}\label{E:trajec}
  I^a=const,\quad \phi_a=\d_a^bt_b+\phi_{a,0}
\end{equation}
where $t_b$ is the time corresponding to the action variable $I^b$. As one can observe from the second of the relations \refE{act_ang}, the transformation $(\ga_1,\ldots,\ga_h)\to\phi$ is nothing but the Abel--Prym transformation. For this reason the trajectories are straight line windings of the Jacobian of the spectral curve for the systems with the structure group $GL(n)$, and of the prymian for the systems with simple structure groups, in particular, for $SO(4)$. Plugging the expression \refE{act_ang} for $\phi_a$  in \refE{trajec}, we will obtain the following quadrature formula for trajectories:
\begin{equation}\label{E:quad}
  \sum_{i=1}^h \int^{\ga_i}\Omega_a=\d_a^bt_b+\phi_{a,0}.
\end{equation}
We set the problem of finding out the trajectories in the coordinates $x_i,y_i,\l_i$ of the points $\ga_i$. To this goal, the Abel--Prym transform in \refE{quad} should be reversed. For the Abel transform, the solution of the problem of inversion "in principle"\ is given by the following theorem due to Riemann. Let $\theta$ be the Riemann theta function, $K$ be the vector of Riemann constants. For a generic point $\phi$ of the Jacobian, the $\A^{-1}(\phi)=D$ where $D=P_1+\ldots+P_g$, $|D|=\{ P_1,\ldots,P_g\}$ is the set of zeroes of the function $F(P)=\theta(\A(P)-K+\phi)$ on the Riemann surface dissected along a previously chosen basis cicles. It is important that $\deg D=\dim Jac$ (and equal to the genus of the Riemann surface).

Unlike the previous case, if $\theta$ is the Prym theta function, $Prym$ is the Prymian of the spectral curve, $\A$ is the Abel--Prym map, and $\phi\in Prym$, then $\deg D=2\dim Prym$ \cite[Corollary 5.6]{Fay}. In other words, the number of zeroes of the function $F$ is twice as big as $\dim Prym$, and, morover, they are in a generic position with respect to the involution. We must conclude that in general no solution to the inversion problem for Prymians can be obtained this way. However, assume that the Riemann surface admits a second involution commuting with the first one, and such that the Prym differentials are skew-symmetric to both involutions.  Denote the involutions by $\tau_1$ and $\tau_2$. Then the Riemann theorem gives $h$ pairs of points ($h=\dim Prym$), such that every pair is invariant with respect to $\tau_1\tau_2$. Note that the situation of existing the two commuting involutions exactly takes place for the Hitchin systems with the structure group $SO(4)$ on genus 2 Riemann surfaces. Indeed, we can take $\tau_1:\,\l\to -\l$,  $\tau_2:\, y\to -y$.

We address now the problem of an effective reversion of the Abel--Prym map in case of two involutions. Basically, we follow ideas by B.Dubrovin in \cite{Dubr_theta} (going back to Riemann) developed for an explicit reversion of the Abel map. We set $\tau=\tau_1\tau_2$. Due to skew-symmetry of Prym differentials with respect to both $\tau_1$ and $\tau_2$, the differentials are $\tau$-invariant. Hence $\A$ is  $\tau$-invariant too, which implies $\tau$-invariance of~$F$. For any meromorphic $\tau$-invariant function $f$ on $\widehat\Sigma$ we consider $\s_f(\phi)=\sum_{P\in |D|} f(P)$ where $D$ is the zero divisorof the function $F$, $|D|=\rm{support}(D)$. By $\tau$-invariance of $F$, the set of its zeroes $|D|$ is $\tau$-invariant too, hence generically it is representable in the form $|D|=\{ P_1,\ldots ,P_{2h}\}$ where $h=\dim Prym$, $P_i=\tau P_{i+h}$ for $i=1,\ldots,h$. Then $\s_f(\phi)=2\sum_{i=1}^h f(P_i)$. Assuming $f$ to have no pole except at infinity, we begin with the following relation by Dubrovin \cite[eq. 11.23]{Dubr_RegCh} (slightly different formulated in \cite[eq. (2.4.29)]{Dubr_theta}):
\begin{equation}\label{E:Dubr}
  \s_f(\phi)=c-\sum_{Q\in\pi^{-1}(\infty)}\res_Q f(P)d\ln\theta(\A(P)-\phi-K)
\end{equation}
where $c$ is independent of $\phi$. The proof by Dubrovin holds true for the Prym theta function, since it relies on the theorem on residues and quasiinvariance relations for theta functions only. Below, we choose $f(\l,x,y)=x^k$. Setting $P_i=(\l_i,x_i,y_i)$, and denoting $\s_f$ by $\s_k$, we observe that
\begin{equation}\label{E:sigm}
  \s_k(\phi)=2(x_1^k+\ldots +x_h^k),
\end{equation}
which means that $\s_k(\phi)$ is a doubled $k$th Newton polynomial in $x_1,\ldots ,x_h$.
Hence, the relation \refE{Dubr} can be written down as
\begin{equation}\label{E:Dubr_k}
  \s_k(\phi)=c-\sum_{Q\in\pi^{-1}(\infty)}\res_Q x^kd\ln\theta(\mathcal{A}(P)-\phi-K).
\end{equation}
Coming back to the case $SO(4)$ and genus 2, we set $h=6$. Since $(d\mathcal{A})_i=\w_i$, we will obtain that
\[
  d\ln\theta(\mathcal{A}(P)-\phi-K)=\sum_{i=1}^6 (\partial_i\ln\theta(\mathcal{A}(P)-\phi-K))\w_i,
\]
where $\w_i$ are normalized differentials \refE{dprym} , $\partial_i$ denotes the derivative in the $i$th argument ($i=1,\ldots,6$). Choosing an arbitrary ${Q_0\in\pi^{-1}(\infty)}$ as a base point of the Abel--Prym transform, in a neghborhood of $Q_0$ we may regard to $\mathcal{A}(P)$ as to a small quantity, and expand $(\ln\theta(\mathcal{A}(P)-\phi-K))_i$ in a Tailor series. After that, we only have to find out a sum of order $z^{2k-1}$ summands of the just obtained expansion where $z$ is a local parameter in the neighborhood  of $Q_0$. Clearly, having been multiplied by $x^k=z^{-2k}$ that sum will give the residue at the point $Q_0$ in \refE{Dubr_k}:
\begin{equation}\label{E:theta_sigma}
    \res_{Q_0} x^kd\ln\theta(\mathcal{A}(P)-\phi-K) = \sum_{i=1}^6\sum_{1\le |j|\le 2k-1} \varkappa_i^jD^j\partial_i\ln\theta(-\phi-K),
\end{equation}
where (for $h=6$) $j=(j_1,\ldots,j_6)$, $|j|=j_1+\ldots+j_6$,
\begin{equation}\label{E:theta_kappa}
  D^j=\frac{1}{j_1!\ldots j_6!}\frac{\partial^{|j|}}{\partial\phi_1^{j_1}\ldots\partial\phi_6^{j_6}},
  \quad \varkappa_i^j=\sum_{l_i+\sum_{s=1}^6\sum_{p=1}^{j_s} l_{sp}=2k-1} \varphi_i^{(l_i)}\prod_{s=1}^6\prod_{p=1}^{j_s}\frac{\varphi_s^{(l_{sp}-1)}}{l_{sp}},
\end{equation}
$l_s$ and $\varphi_s^{(l_s)}$ are defined from the relation $\A_s(P)=\sum_{l_s\ge 1}\frac{\varphi_s^{(l_s)}}{l_s}z^{l_s}$ ($P=P(z)$). A computation of the residue in an arbitrary point $Q\in\pi^{-1}(\infty)$ is similar, except that the expansion in the Tailor series should be performed in the quantity $\A(P)-\A(Q)$ which only results in a shift of the argument of the theta function by $\A(Q)$ in the right hand side of the relation \refE{theta_sigma}. We finally obtain
\begin{equation}\label{E:theta_sigma_1}
    \s_k(\phi)=const- \sum_{Q\in\pi^{-1}(\infty)}\sum_{i=1}^6\sum_{1\le |j|\le 2k-1} \varkappa_i^jD^j\partial_i\ln\theta(\A(Q)-\phi-K),
\end{equation}
The functions $\s_k(\phi)$, $k=1,\ldots,6$ provide a full set of symmetric functions of $x$-coordinates of the points in $\A^{-1}(\phi)$. They define $x_1,\ldots,x_6$ up to a permutation. To find out the corresponding $y_i$, $\l_i$ we plug $x_i$ into the system of equations \refE{spec_so4} which is easy resolvable with respect to $y$ and $\l$.
\begin{proposition}\label{P:loc_rev}
Let $\ga_0=\{ (x_1^0,y_1^0,\l_1^0),\ldots,(x_6^0,y_6^0,\l_6^0) \}$be a point of the phase space, $\phi_0=\A(\ga_0)$. Also, let $(x_1,\ldots,x_6)$ be a solution of the system of equations \refE{sigm} (where $k=1,\ldots,6$), $y_i$ be a solution of the second of the equations \refE{spec_so4} after the substitution $x=x_i$, $\l_i$ be a solution of the first of the equations  \refE{spec_so4} after the substitution $x=x_i$, $y=y_i$ such that $(x_i,y_i,\l_i)$ belongs to the same branch of the spectral curve as  $(x_i^0,y_i^0,\l_i^0)$. Then $\ga=\{ (x_1,y_1,\l_1),\ldots,(x_6,y_6,\l_6) \}=\A^{-1}(\phi)$ outside the branch points.
\end{proposition}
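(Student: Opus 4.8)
The plan is to verify that the point $\ga$ reconstructed by the prescribed procedure is genuinely the full Abel--Prym preimage of $\phi$, by tracking carefully which ambiguities the construction introduces and showing that the branch-matching condition removes exactly those that matter. First I would recall the setup established above \refE{Dubr_k}--\refE{theta_sigma_1}: for the Prym theta function, $\tau$-invariance of $F(P)=\theta(\A(P)-\phi-K)$ forces its zero divisor $|D|$ to be $\tau$-invariant, so generically $|D|=\{P_1,\dots,P_{12}\}$ with $P_i=\tau P_{i+6}$; the functions $\s_k(\phi)$, $k=1,\dots,6$, produced by \refE{theta_sigma_1} are, by \refE{sigm}, twice the Newton polynomials in the six numbers $x(P_1),\dots,x(P_6)$. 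Since $x$ is $\tau_1$-invariant and $\tau_2$ acts by $y\to -y$, the two points of each $\tau$-orbit share the same $x$-coordinate; hence the six values $x_1,\dots,x_6$ are intrinsically attached to $\phi$, and $\s_1,\dots,\s_6$ determine them up to permutation (the elementary symmetric functions being recoverable from the first six power sums). This is the heart of why the inversion works at all, and it is where the two commuting involutions are used essentially.

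Next I would address the reconstruction of $y_i$ and $\l_i$. Given $x_i$, the equation $y^2=P_5(x_i)$ has two roots $\pm y_i$, and then $\l^4+\l^2 p(x_i)+q(x_i)^2=0$ is a biquadratic in $\l$ with (generically) four roots; so there are $a$ priori $2\cdot 4=8$ choices of $(y_i,\l_i)$ over each $x_i$. The key observation is that these eight choices are exactly the eight sheets of the normalized spectral curve over a fixed value of $x$, i.e. the fiber of $\pi$ composed with the projection to the $x$-line; equivalently they are the $\tau_1,\tau_2$-orbit structure on the preimage. Fixing the requirement that $(x_i,y_i,\l_i)$ lie on the same branch of the spectral curve as the initial datum $(x_i^0,y_i^0,\l_i^0)$ selects one of these eight points — this is legitimate precisely away from the branch points \refE{br_p} and the singular points \refE{origso4}, where the branches are locally well-defined and do not collide. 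So near $\ga_0$ the whole assignment $\phi\mapsto\ga(\phi)$ is single-valued and holomorphic, and at $\phi=\phi_0$ it returns $\ga_0$ by construction.

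It then remains to identify $\ga(\phi)$ with $\A^{-1}(\phi)$. I would argue this by a continuity/monodromy argument rather than by a direct computation: both $\phi\mapsto\A(\ga(\phi))$ and the identity map on $Prym$ are holomorphic on a neighborhood of $\phi_0$, they agree at $\phi_0$ (since $\A(\ga_0)=\phi_0$), and — this is the point to check — their differentials agree at $\phi_0$. The latter follows by differentiating the defining relations: differentiating $\s_k(\phi)=2\sum x_i^k$ together with the Poisson/symplectic relations \refE{Sympl2} that tie $d\l_i\wedge dx_i/y_i$ to $\sum dI^a\wedge\Omega_a(\ga_i)$, one recovers that the tangent map of $\phi\mapsto\ga(\phi)$ is inverse to $d\A$; alternatively one invokes that \refE{Dubr} is literally the statement, valid for the Prym theta function since its proof uses only the residue theorem and the quasiperiodicity of $\theta$, that $\s_k$ computed from the $\A$-preimage equals $\s_k$ computed from \refE{theta_sigma_1}. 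Hence $\A(\ga(\phi))=\phi$ on a neighborhood, and because $\ga(\phi)$ has the correct number of components with the correct $\tau$-orbit pairing, no genuine preimage point is missed; therefore $\ga(\phi)=\A^{-1}(\phi)$ off the branch locus.

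I expect the main obstacle to be the bookkeeping in the middle step: making precise that the eight-fold $(y,\l)$-ambiguity over each $x_i$ coincides with the branch structure of $\widehat\Sigma\to\C_x$ and that "same branch as $(x_i^0,y_i^0,\l_i^0)$" is a well-posed, locally consistent choice exactly outside the $16$ branch points \refE{br_p} and the four singular points \refE{origso4}. One must also confirm that genericity of $\phi_0$ guarantees the zero divisor $|D|$ is reduced and in general position with respect to $\tau$ (so the pairing $P_i=\tau P_{i+6}$ is literally valid, with the $x_i$ pairwise distinct), which is where \cite[Corollary 5.6]{Fay} and the non-degeneracy of the Prym theta divisor enter.
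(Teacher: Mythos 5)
Your proposal is correct in substance, but it takes a much longer route than the paper, whose entire proof is two sentences: the proposition ``is nothing but a statement of local invertibility of the map $\A$,'' justified by observing that $d\A=(\w_i(P^j))$ is generically a nondegenerate matrix. Everything you supply about the $\tau$-invariance of the zero divisor, the pairing $P_i=\tau P_{i+6}$, the recovery of $x_1,\ldots,x_6$ from the first six power sums, and the eight-fold $(y,\l)$-ambiguity resolved by the branch-matching condition is left implicit in the paper (it is really the content of the discussion preceding the proposition, around \refE{Dubr}--\refE{theta_sigma_1}, rather than of the proof itself). Your version buys a genuine account of \emph{why} the reconstruction is well-posed away from the branch locus; the paper's version buys brevity by reducing everything to the inverse function theorem for $\A$.

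One sub-step of your third paragraph should be repaired: two holomorphic maps that agree at a single point together with their differentials there need not coincide on a neighborhood, so ``they agree at $\phi_0$ and their differentials agree at $\phi_0$'' does not by itself give $\A(\ga(\phi))=\phi$. You in fact supply the correct fix yourself in the same sentence: the identity $\A(\ga(\phi))=\phi$ holds because \refE{Dubr} (valid for the Prym theta function, as its proof uses only the residue theorem and quasiperiodicity) guarantees that the $x_i$ produced by \refE{sigm} are exactly the $x$-coordinates of the $\tau$-orbits in the zero divisor of $F$, and then the nondegeneracy of $d\A=(\w_i(P^j))$ --- the paper's one observation --- pins down the unique preimage on the chosen branches. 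State the argument in that order and drop the first-order-agreement clause, and your proof is complete.
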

\begin{proof}
The Proposition is nothing but a statement of local invertibility of the map $\A$. Observe that $d\A=(\w_i(P^j))$ is a nondegenerate matrix in general, hence $\A$ is indeed locally invertible.
\end{proof}

\section{Local solutions}\label{S:Loc_sol}
Proposition \ref{P:loc_rev} enables us to explicitly find out local solutions to a Hitchin system, namely solutions in the domains of the phase space having the form of a product of  $h$ open subsets of the spectral curve, each one belonging to one of its branches.

We recall that for simple classical groups Hitchin trajectories are straight line windings of Prymians \cite{Hitchin}. In the action-angle coordinates $I,\phi$ such windings on the Prymian $I=const$ are given by equations of the form $\phi_a=\frac{\partial f}{\partial I^a}t+\phi_{0a}$, where $f$ is a function, $t$ is the time corresponding to the Hamiltonian $f(I)$ by virtue of the Noether's theorem.

We address the problem (having been set in the previous Section) of finding out the Hitchin trajectory with the Hamiltonian $f(I)$ passing through a point $\ga_0$ at $t=0$, in the coordinates $\{ (x_1,y_1,\l_1),\ldots,$ $(x_6,y_6,\l_6) \}$. We propose the following algorithm of resolving the problem:
\begin{itemize}
\item[$1^\circ$]
   Find out the action-angle coordinates of $\ga_0$ as follows:
   $\phi_0=\A(\ga_0)$; $I=AH$ where $A$ is the transition matrix from the base \refE{dprym} of unnormalized Prym differentials to the base $\Omega_a$ of normalized ones, coefficients $H$ in \refE{spec_so4} are obtained from the equations \refE{sep};
\item[$2^\circ$]
     Find out $\s_k(\phi)$ ($k=1,\ldots,6$) by means of the relations  \refE{theta_sigma}.
\item[$3^\circ$]
Then find out $x_1,\ldots,x_6$ as functions of $\phi$ from the relations \refE{sigm}, up to permutation, and plug there  $\phi_a=\frac{\partial f}{\partial I^a}t+\phi_{0a}$, ;
\item[$4^\circ$]
     For $x_i$ ($i=1,\ldots,6$) find out $y_i$, $\l_i$ by means the procedure described in  \refP{loc_rev}.
\end{itemize}
\section{The Hitchin system with the structure group  $SL(2)$ on a genus 2 curve}\label{S:SL2}
A Hitchin system with the algebra $\sln(2)$ on a genus $g$ hyperelliptic curve has  $N = 3(g - 1)$ independent Hamiltonians. For $g=2$ we have $N=3$. The Lie algebra $\sln(2)$ has the only invariant of the second order, hence the spectral curve has the form
\begin{equation}\label{specsl2}
R(\l, x, y) = \l^2 + r_2(x) = \l^2 + (H_0 + xH_1 + x^2H_2) = 0
\end{equation}
in this case. It is a smooth curve, because the equations on singular points ($R'_{\l} = 0$  and $R'_{x}=0$) amount in $r_2(x) = 0$ and $(r_2(x))'_x = 0$, and are incompatible unless  $r_2(x)$ has a multiple root.

The spectral curve is a two fold branch covering of the base curve, and the branch points are given by solutions of the equation $R'_{\l} = 0$ which amounts in $r_2(x) = 0$. By the symmetry in  $y$, there is  four branch points. A genus $\hat{g}$ of the spectral curve can be found out by the Riemann--Hurwitz formula:
\[
2\hat{g} - 2 = 2(2g - 2) + 4 .
\]
For $g = 2$ we obtain $\hat{g} = 5$.

In turn, the base curve is a two fold branch covering of the Riemann sphere, hence the spectral curve can be regarded to as a  4-fold branch covering of the sphere having 8 branch points on every sheet (the 6 points serve to glue the base curve of the spheres, and two more serve to glue two copies of the base curve). Thus the corresponding Riemann surface has the form given at Figure \ref{sl2_pic}, where the circles denote the Riemann spheres (that is, the sheets of the covering of the sphere), while every segment corresponds to the gluing along a contour conecting a pair of branch points (we may think of these segments as of tubes connecting the corresponding spheres):
\begin{figure}[h]
\begin{picture}(150,50)
\unitlength=0.6pt
\thicklines
\thinlines
\put(10,20){\circle{20}}
\put(10,80){\circle{20}}
\put(10,30){\line(0,1){40}}
\put(3,28){\line(0,1){44}}
\put(16,28){\line(0,1){44}}
\put(80,20){\circle{20}}
\put(80,80){\circle{20}}
\put(80,30){\line(0,1){40}}
\put(73,28){\line(0,1){44}}
\put(86,28){\line(0,1){44}}
%
\put(20,20){\line(1,0){50}}
\put(20,80){\line(1,0){50}}
\end{picture}
\caption{Spectral curve in the case $\sln(2)$, $g=2$}\label{sl2_pic}
\end{figure}

The spectral curve is invariant with respect to the involution $\l\to -\l$ (a reflection with respect to the vertical axis at the Figure). Fixed points of the involution coincide with the branch points given by the equation $\l=0$ (that is, with the branch points of the spectral curve over the base one; at the Figure, they correspond to the horizontal segments).

According to the results of \cite{BorSh}, base of the Prym differentials  is given by $\omega_i = (x^idx)/(2y\l)$,  $i = 0, 1, 2$.

For the reversion of the Abel--Prym map in a similar way to \refS{Inv_AP}, it is sufficient to compute the functions $\s_k(\phi)$ for $k=1,2,3$ by means the expression \refE{theta_sigma}. We restrict ourselves with the computation of $\s_1(\phi)$ here.
\begin{equation}\label{E:theta_sigma1}
\begin{aligned}
    \s_1(\phi)&=const- \sum_{i=1}^3\sum_{j_1+j_2+j_3= 1} \varkappa_i^jD^j\partial_i\ln\theta(-\phi-K)\\
    &= const- \sum_{i=1}^3\left( \varkappa_i^{(1,0,0)}\frac{\partial}{\partial\phi_1} +
    \varkappa_i^{(0,1,0)}\frac{\partial}{\partial\phi_2} +
    \varkappa_i^{(0,0,1)}\frac{\partial}{\partial\phi_3}
       \right)\partial_i\ln\theta(-\phi-K)
\end{aligned}
\end{equation}
To begin with, we consider the first summand in the brackets. Since $j_1=1$, $j_2=j_3=0$, it follows $l_i+l_{11}=1$ for the summation range in \refE{theta_kappa}, which follows from  $l_i+\sum_{s=1}^h\sum_{p=1}^{j_s} l_{sp}=1$,  and since $l_{11}\ge 1$ we have $l_i=0$, $l_{11}=1$. As a result, we obtain
\[
\varkappa_i^{(1,0,0)}=\sum_{l_i+l_{11}=1} \varphi_i^{(l_i)}\frac{\varphi_1^{(l_{11}-1)}}{l_{11}} =
\varphi_i^{(0)}\varphi_1^{(0)}.
\]
Similarly, if the unit is in the $s$th position in $j$ then $\varkappa_i^j=\varphi_i^{(0)}\varphi_s^{(0)}$, and (taking account of $\partial_i=-\partial/\partial\phi_i$) we obtain
\begin{equation}\label{E:theta_sig1}
\begin{aligned}
  \s_1(\phi)&=const+\sum_{i=1}^3\sum_{s=1}^3 \phi_i^{(0)}\phi_s^{(0)} \frac{\partial}{\partial\phi_i}\frac{\partial}{\partial\phi_s}\ln\theta(-\phi-K)\\
  &= const+ \left( \sum_{i=1}^3 \phi_i^{(0)}\frac{\partial}{\partial\phi_i}
            \right)^2\ln\theta(-\phi-K).
\end{aligned}
\end{equation}
This is exactly the same expression as the one in \cite{Dubr_theta}, obtained there for the Riemann theta function.
\section{Concluding remarks}\label{S:concl}
The Hitchin systems with the structure groups $SO(4)$ and $SL(2)$ on genus 2 curves are very specific in a sense that their spectral curves possess two commuting involutions $\l\to -\l$ and $y\to -y$, with the same space of Prym differentials. It follows from the results of  \cite{Sh_FAN_2019,Aint_sys} that there is no other such Hitchin system with a simple classical structure group on a hyperelliptic Riemann surface. While it is easy to find out a spectral curve inheriting the involution  $y\to -y$ from the base curve, the spaces of Prym differentials of the two involutions generically never coincide, except for the above two cases.

\bibliographystyle{amsalpha}

\end{document}